\title[Short Title]{Who Moved My Distribution? Conformal Prediction for Interactive Multi-Agent Systems}
\newtheorem{assumption}[theorem]{Assumption}
\newtheorem{problem}{Problem}
\newcommand{\basefirst}{NCP}
\newcommand{\basesec}{BCP}
\newcommand{\scp}{ISCP*}
\newcommand{\cp}{ICP*}
\newcommand{\trajpred}{\texttt{TrajPred}}
\newcommand{\distributionshift}{{endogenous distribution shift}}
\newcommand{\D}{\mathcal{D}}
\newcommand{\T}{\mathcal{T}}
\newcommand{\CP}{\textit{CP}}
\author{%
 \Name{Allen Emmanuel Binny} \Email{allenebinny@kgpian.iitkgp.ac.in}\\
 \addr Department of Electronics and Electrical Communication Engineering, IIT Kharagpur, India
 \AND
 \Name{Anushri Dixit} \Email{anushridixit@ucla.edu}\\
 \addr Department of Mechanical and Aerospace Engineering, University of California, Los Angeles, USA%
}
\begin{document}

\maketitle

\begin{abstract}%
Uncertainty-aware prediction is essential for safe motion planning, especially when using learned models to forecast the behavior of surrounding agents. Conformal prediction is a statistical tool often used to produce uncertainty-aware prediction regions for machine learning models. Most existing frameworks utilizing conformal prediction-based uncertainty predictions assume that the surrounding agents are non-interactive. This is because in closed-loop, as uncertainty-aware agents change their behavior to account for prediction uncertainty, the surrounding agents respond to this change, leading to a distribution shift which we call \textit{endogenous distribution shift}. To address this challenge, we introduce an {iterative conformal prediction framework} that systematically adapts the uncertainty-aware ego-agent controller to the endogenous distribution shift. The proposed method provides probabilistic safety guarantees while adapting to the evolving behavior of reactive, non-ego agents. We establish a model for the endogenous distribution shift and provide the conditions for the iterative conformal prediction pipeline to converge under such a distribution shift. We validate our framework in simulation for 2- and 3- agent interaction scenarios, demonstrating collision avoidance without resulting in overly conservative behavior and an overall improvement in success rates of up to $9.6\%$ compared to other conformal prediction-based baselines.
\end{abstract}

\begin{keywords}%
Conformal prediction, model predictive control, multi-agent systems, trajectory forecasting, safe planning
\end{keywords}

\section{Introduction}
Modern learning-based methods have demonstrated impressive capabilities in predicting multi-agent trajectories. These include architectures such as long short-term memory (LSTM) networks (\cite{Precog, social_lstm, trajectron}), diffusion models (\cite{SICNAV_Diffusion}), and transformers (\cite{motion_transformer, AgentFormer_2021}). For self-driving applications, trajectory predictors are typically trained on data collected from simulators, such as CARLA (\cite{CARLADosovitskiy17}), or large-scale datasets, such as the Waymo Open Dataset (\cite{WaymoOpenDataSet_2024}). However, these predictors cannot provide safety guarantees for collision avoidance and are susceptible to poor performance under distribution shifts when encountering previously unobserved data, potentially leading to catastrophic failures~(\cite{sfkitkatcrash}). Hence, we need uncertainty quantification tools that can provide safety and reliability guarantees for learning-based trajectory predictors. We employ conformal prediction (CP), a statistical tool introduced by~\cite{2005_Vovk_ConformalPrediction} to provide formal guarantees with user-specified probability $1 - \epsilon$ for navigation alongside reactive agents. While previous methods by~\cite{kuipers2024offpolicy, 2023LindemannConformal, pmlr_dixit23a} have introduced CP techniques for safe multi-agent interactions where the policies of the other, non-ego agents are not allowed to change, we uniquely establish a method for multi-agent CP where the non-ego agents can have policies similar to the ego agent and are allowed to respond to changes in the ego agent's behavior.

\begin{figure}[h]
\vspace{-2mm}
    \centering
    \includegraphics[width=1\linewidth]{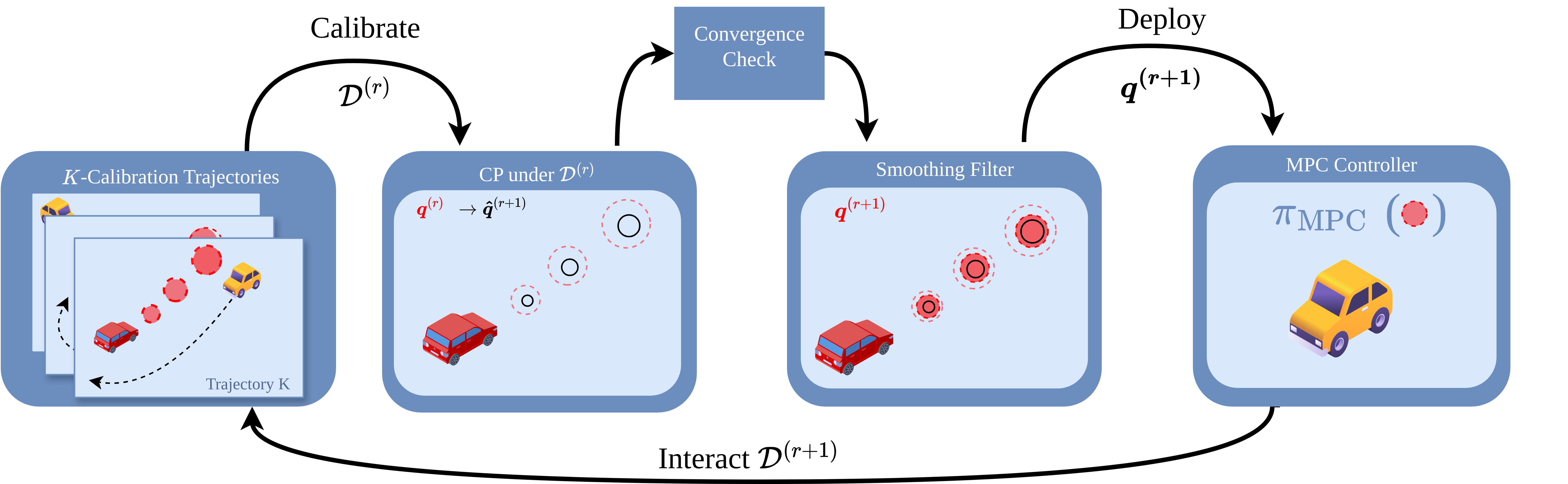}
    \caption{\emph{Iterative Conformal Prediction Framework Overview:} The process alternates between calibration using collected agent trajectories, updating and smoothing the conformal prediction sets, and deploying an uncertainty-aware controller, with iterative adaptation until convergence is reached under \distributionshift.}
    \label{fig:anchor_Figure}
    \vspace{-2mm}
\end{figure}
Incorporating uncertainty predictions into learning-based decision-making in environments with other reactive agents introduces new challenges. Typically, the learned trajectory prediction module is trained with uncertainty-\textit{agnostic} controllers. Consequently, deploying an uncertainty-\textit{aware} controller results in a distribution shift, as other agents adjust their behavior in response to changed ego agent behavior (from an uncertainty-agnostic policy to uncertainty-aware policy). We refer to this phenomenon as \emph{\distributionshift}, which remains largely unaddressed in existing safe motion planning frameworks. Our motivation stems from two sources: 1) the concept of \emph{closed-loop distribution shift} described introduced by~\cite{mei2025perceiveconfidencestatisticalsafety}, where the closed-loop deployment of an uncertainty-quantified perception system combined with a safe planner causes a shift in the state distribution, and 2) the method of \textit{performative prediction} developed by ~\cite{pmlr_perdomo20a} to study the influence of predictive models on the outcomes they aim to predict. Building on these two threads, we provide a multi-agent CP framework that can adapt to endogenous distribution shifts instead of being robust to distribution shifts like~\cite{mei2025perceiveconfidencestatisticalsafety}. The work by~\cite{mei2025perceiveconfidencestatisticalsafety} also provides safety guarantees only for static environments (without other agents) and requires the use of a sampling-based safe planner. Our work addresses performativity for multi-agent decision-making by adapting the  framework by~\cite{pmlr_perdomo20a} to CP-based multi-agent planning. 
 The main contributions of this paper are:
\begin{itemize}
    \vspace{-7pt}
    \item We introduce a model for \distributionshift~in uncertainty-aware motion planning, where closed-loop deployment of CP-based predictions alters the data distribution of the interactions between agents and undermines the performance of the controller.
    \vspace{-7pt}
    \item We propose an iterative CP algorithm (as demonstrated in Figure \ref{fig:anchor_Figure}) integrated with model predictive control (MPC) that recalibrates prediction regions using trajectory data collected under the deployed controller, providing probabilistic safety guarantees while accounting for endogenous distribution shifts in reactive multi-agent settings.
    \vspace{-7pt}
    \item We provide theoretical insights for the convergence of our iterative CP algorithm under mild Lipschitzness assumptions.
    \vspace{-7pt}
    \item We demonstrate our results through experiments in a multi-agent environment simulation and compare them to the benchmarks of cases without the iterative framework of CP.
\end{itemize}

\section{Preliminaries}



\textbf{Trajectory Prediction:} Given observations $(Y_{0},\dots,Y_t)$ at time $t$, we want to predict future states $(Y_{t+1},\ldots,Y_{t+H})$ for a prediction horizon of $H$. Assume that \trajpred~is a function that maps observations $(Y_{0},\dots,Y_t)$ to predictions $(\hat{Y}_{t+1|t},\ldots,\hat{Y}_{t+H|t})$. Note that $t$ in $\hat{Y}_{t+\tau|t}$ denotes the time at which the prediction is made, while $\tau$ indicates how many steps we predict ahead. To obtain multi-step predictions, the model is rolled out recursively: the predicted $\hat Y_{t+1|t}$ is fed back as input to predict $\hat Y_{t+2|t}$, and so on, up to horizon $H$. While our framework is compatible with any trajectory predictor, in practice, we adopt recurrent neural networks (RNNs) (\cite{1997LSTM}), specifically long short-term memory (LSTM) models, due to their ability to capture nonlinear and long-term dependencies.


\textbf{Conformal Prediction:} Conformal prediction (CP) (\cite{2005_Vovk_ConformalPrediction, angelopoulos2025conformalriskcontrol}) will be our primary tool for performing rigorous uncertainty quantification for trajectory prediction. Given $K$ i.i.d. (or exchangeable) samples $U_1, \ldots, U_K$ of a scalar random variable $U$, we compute the threshold, $\hat{q}_{1-\epsilon}$, such that the next sample, $U_{\text{test}}$, satisfies,
\begin{equation}\label{eq: CP_guarantee}
\mathbb{P}[U_{\text{test}} \leq \hat{q}_{1-\epsilon}] \geq 1 - \epsilon,
\end{equation}
\begin{align}\label{eq: CP_threshold}
\hat{q}_{1-\epsilon} &= \begin{cases}
U_{(\lceil (K+1)(1-\epsilon) \rceil)} & \text{if } \lceil (K+1)(1-\epsilon) \rceil \leq K, \\
\infty & \text{otherwise},
\end{cases}
\end{align}
and $U_{(1)} \leq U_{(2)} \leq \ldots \leq U_{(K)}$ are the order statistics (sorted values) of the $K$ samples $U_1, \ldots, U_K$. In the CP literature, $U$ is known as the non-conformity score and it is a measure of the (in)correctness of a model. The above guarantee is \emph{marginal}, i.e., it holds over the sampling of both the calibration dataset $U_1, \ldots, U_K$ and the test variable $U_{\text{test}}$. Hence, we need to generate a fresh set of i.i.d. calibration data $U_1, \ldots, U_K$ for the guarantee to hold for a new sample $U_{\text{test}}$. However, in practice, one typically only has access to a single dataset of examples; inferences from this dataset must be used for all future predictions on test examples. In this work, we use the following dataset-conditional guarantee (\cite{pmlr-v25-vovk12,angelopoulos2022gentleintroductionconformalprediction}) that does not require us to generate $K$ new samples for every test prediction and holds with probability $1 - \delta$ over the sampling of the calibration dataset,
\begin{equation}\label{eq:CP_Prediction}
    \mathbb{P}[U_{\text{test}} \leq \hat{q}_{1-\epsilon}|U_1, \ldots, U_K] \geq \text{Beta}_{K+1-v,v}^{-1}(\delta),
\end{equation}
where, $v := \lfloor (K + 1)\epsilon \rfloor$, and $\text{Beta}_{K+1-v,v}^{-1}(\delta)$ is the $\delta$-quantile of the Beta distribution with parameters $K + 1 - v$ and $v$. We can choose $\epsilon$ to achieve the desired $1 - \epsilon$ coverage.

\textbf{Notation:} The system operates with $N$ other dynamic agents whose trajectories are apriori unknown. We define an iterate $r$, for which $\D^{(r)}$ denotes an unknown distribution over agent trajectories collected at that iteration. We sample an interaction between agents by sampling a random trajectory $Y^{(r)}\coloneqq (Y_{0}^{(r)},Y_{1}^{(r)},\dots) \sim \mathcal{D}^{(r)}$ where the joint agent state at time $t$ is given by $Y_{t}^{(r)}\coloneqq(Y_{t,0}^{(r)},\ldots,Y_{t,N}^{(r)})\in \mathbb{R}^{(N+1)\times n}$,\footnote{We assume the state of each agent to be $n$-dimensional. This can easily be generalized.} {i.e.}, \(Y_{t,j}\) is the state of agent $j$ at time $t$ (agent $0$ is used to denote the ego agent). During the calibration phase, for each iteration $r$, we form a dataset of $K$ interaction trajectories, $D_{cal}^{(r)}\coloneqq\{Y^{(r),1},\ldots,Y^{(r),K}\}$ in which each of the $K$ trajectories are exchangeable and $Y^{(r),i}\coloneqq\{Y_0^{(r),i},Y_1^{(r),i},\ldots\}\sim\mathcal{D}^{(r)}$. For simplicity, sometimes we drop the notation for each iteration $r$ when the same algorithm is applied for all the iterations in the same way, {i.e.}, $Y^{(r)}$ is sometimes denoted as $Y$. Similarly, for the rest of the paper, we consider the two agent case, i.e., $N=1$, but our results hold for any $N>1$. The same notation is used for predicted trajectories, but instead of $Y$, we denote predicted trajectories by $\hat{Y}$. We denote, $\mathcal{D}^* := \lim_{r\rightarrow \infty}\mathcal{D}^{(r)}$ and $Y^* := \lim_{r\rightarrow \infty}Y^{(r)}$.

\section{Problem Statement}
\subsection{System Dynamics and Environment}
We consider an (ego) agent governed by the discrete-time dynamics,
\begin{equation}
\label{eq:dyn}
x_{t+1} = f(x_t,u_t), \qquad x_0 = x(0),
\end{equation}
with state $x_t \in \mathcal{X} \subseteq \mathbb{R}^n$ and control input $u_t \in \mathcal{U} \subseteq \mathbb{R}^m$ at time $t$. 
The sets $\mathcal{X}$ and $\mathcal{U}$ denote state and input constraints, respectively, $f:\mathcal{X}\times\mathcal{U}\to \mathcal{X}$ describes the dynamics, and $x(0) \in \mathbb{R}^n$ is the initial condition.

The ego agent navigates an environment with $N$ other dynamic agents (we present the $N=1$ case for simplicity) that can plan and react to the behavior of the other agents in a decentralized manner. Within its motion planning framework, each agent predicts other agents' locations using any learned trajectory predictor module, \trajpred~ (\cite{trajectron, social_lstm}). In this work, we aim to make uncertainty-aware predictions on the behavior of such reactive agents so that we can provide statistical assurances for the safety of the system~\eqref{eq:dyn} without any information about the dynamics and control of the dynamic agents in the environment. We utilize a model predictive control (MPC) strategy with cost $J$ and prediction horizon $H$ to ensure safety, 
\begin{subequations}\label{eq:open_loop}
\vspace{-2mm}
\begin{align}
    \pi_{\text{MPC}}(\boldsymbol{d}_{\text{min}}) = \min_{(u_{t|t},\hdots,u_{t+H-1|t})}& \sum_{\kappa=0}^{H-1}J(x_{t+\kappa+1|t},u_{t+\kappa|t}) \\
     \text{s.t.}\qquad & x_{t+\kappa|t}=f(x_{t+\kappa-1|t},u_{t+\kappa-1|t}), \\
     & c(x_{t+\kappa|t},\hat{Y}_{t+\kappa|t})\ge d_{\text{min}, \kappa},\\
     & u_{t+\kappa|t} \in \mathcal{U},x_{t+\kappa+1|t} \in \mathcal{X},&\\
      & x_{t|t} = x(t),&
\end{align}
\end{subequations}
where, $x_{t+\kappa|t}, u_{t+\kappa|t}$ describe the state and control predicted for time ${t+\kappa}$ at time $t$ for $\kappa\in\{1,\hdots, H\}$, and $\hat{Y}_{t+\kappa|t}$ describes the $H$-step predicted trajectories of each of the $N$ other agents at time $t$ using \trajpred. The collision avoidance safety constraint is described by the function $c$ that measures the predicted distance between the ego agent state $x_{t+\kappa|t}$ and that of the other agents in the scene, i.e., $c(x_{t+\kappa|t},\hat{Y}_{t+\kappa|t}) = \lVert x_{t+\kappa|t} - \hat{Y}_{t+\kappa|t} \rVert_2$. 

Based on the accuracy of \trajpred, the predicted states of the other agents, $\hat{Y}_{t+\kappa|t}$, may be imperfect. To ensure the safety of the ego agent, we quantify the uncertainty associated with \trajpred~using an uncertainty-aware collision threshold, $d_{\text{min}, \kappa} = \hat{q}_\kappa$ using CP. We assume that we have access to a calibration dataset of i.i.d. interactions between all agents. We compute the prediction error $U =\lVert Y - \hat{Y}\rVert$ using the actual realized trajectory of the agents, $Y$ and the predicted trajectory, $\hat{Y}$, to estimate a statistically valid $\boldsymbol{q} = [q_1,\ldots,q_H]^T$. For simplicity, we consider that the other $N$ agents utilize an MPC planner \(\pi_{\text{MPC}}(\boldsymbol{0})\). However, this can be generalized to another planner with obstacle avoidance, such as using control barrier functions (\cite{2017CBFAmes}) or dynamic window approach (\cite{1997FoxDWA}).

\subsection{Endogenous Distribution Shifts}
When collecting the initial calibration dataset, we utilize a nominal controller without any uncertainty-aware constraint tightening, i.e., $\pi_{\text{MPC}}(\boldsymbol{0})$ with $\boldsymbol{d}_{\text{min}}=\boldsymbol{0}$. We then estimate the desired tightening $\boldsymbol{d}_{\text{min}} = \boldsymbol{q}$ using CP (defined in section \ref{Predictive Region Construction}) leading to a new, uncertainty-aware policy, $\pi_{\text{MPC}}(\boldsymbol{q})$ that is reliable and safe for any environment drawn from the distribution $\mathcal{D}^{(0)}$ with  the desired probability. Since other agents in the environment have reactive policies, their behaviors will now change in response to the new policy leading to an unknown distribution shift in the interaction trajectories $\D^{(0)} \rightarrow \D$. We define this distribution shift caused by internal changes to the system controller \emph{\distributionshift}. Our goal in this work is to ensure the safety of the ego agent in such a dynamic and shifting environment without comprising on the performance of the controller.

\begin{problem}
Consider the discrete-time dynamical system given in~\eqref{eq:dyn} with a control law governed by $u = \pi_{\text{MPC}}(\boldsymbol{0})$ that accounts for the motion of other unknown agents in the environment, $Y^{(0)}_{t}\sim \mathcal{D}^{(0)}$ using a learned trajectory predictor, \trajpred. Compute a constraint tightening, $\boldsymbol{q}^*$ associated with a stable, non-shifting distribution $\mathcal{D}^*$, such that $\pi_{\text{MPC}}(\boldsymbol{q}^*)$ is safe with probability $1-\epsilon$ with,
\begin{equation}
\mathbb{P}\left(\|Y^*_{t+\kappa} - \hat Y_{t+\kappa|t}\| \leq q_\kappa^*\right) \geq 1-\epsilon, \quad \forall \kappa \in \{1,\dots,H\},
\end{equation}
where, the realized agent interactions under the control law $\pi_{\text{MPC}}(\boldsymbol{q}^*)$ satisfy $Y^*_{t+\kappa}\sim \mathcal{D}^*$ and $Y^*_{t+\kappa} - \hat Y_{t+\kappa|t}$ is the difference between the actual trajectories of all agents at time $t+\kappa$ and the predicted trajectories as predicted by \trajpred~at time $t$ with failure probability $\epsilon\in[0,1)$. 
\end{problem}

\section{Conformal Prediction under Endogenous Distribution Shift}
We now describe our {iterative conformal prediction} framework to adapt the CP-based uncertainty predictions to any \distributionshift. As introduced earlier, an \distributionshift~occurs when the environment, comprising reactive agents, changes in response to a shift in the ego agent's controller. We begin by formalizing a mathematical model for this distribution shift.
\subsection{Modeling Endogenous Distribution Shift}\label{endodistributionshift}
Let $\mathcal{Y}$ be the space of all possible trajectories and $\mathcal{P}(\mathcal{Y})$ be the set of all probability distributions on $\mathcal{Y}$. We can describe a metric space $(\mathcal{Y}, W_1)$, where $W_1$ is the $1-$Wasserstein distance.\footnote{We may alternatively equip the metric space with another metric like the Total Variation Distance or the Kolmogorov–Smirnov (KS) metric.} We propose to capture the \distributionshift ~through an iterative map $\mathcal{T}:\mathcal{P}(\mathcal{Y}) \to \mathcal{P}(\mathcal{Y})$ defined as
\begin{equation}
    \D^{(r+1)} = \mathcal{T}(\D^{(r)}), \quad \text{where} \quad \mathcal{T} = F \circ \CP.
\end{equation}
Here, ${\D^{(r)}}\in \mathcal{P}(\mathcal{Y})$ is the distribution of interaction trajectories calibrated at iteration $r$, and $\boldsymbol{q}^{(r)} = \CP(\D^{(r)})$ is the uncertainty threshold obtained from CP applied to $\D^{(r)}$ (as detailed in Section \ref{Predictive Region Construction}).  Effectively, the conformal prediction operator $\CP$ maps the trajectory distribution $\D^{(r)}$ to a statistically valid prediction error $\boldsymbol{q}^{(r)}$, using a calibration dataset $D_{cal}^{(r)} \coloneqq \{Y^{(r),1},\ldots,Y^{(r),K}\}$, where each trajectory $Y^{(r),i} \sim \D^{(r)}$. The mapping $F$ is the new trajectory distribution induced by running $\pi_{\text{MPC}}(\boldsymbol{q}^{(r)})$, as seen in Equation \ref{eq:open_loop}. Thus, $\mathcal{T}$ models the combined effect of uncertainty quantification through CP and the resulting trajectory adaptations by the MPC planner. This formulation enables us to capture how the behaviors of other agents evolve in response to changes in the MPC controller of the ego agent.
Our objective is to find a target controller $\pi^*_{\text{MPC}}$ that induces no further distribution shift, {i.e.}, \(\D^{*} = \mathcal{T}(\D^*)\).
In other words, if $\T$ has a fixed point and show convergence of our end-to-end prediction and planning system to a unique fixed-point distribution $\mathcal{D}^*$, then we can find this stable solution and controller using an iterative algorithm. 

\begin{assumption}[Lipschitz Continuity of Conformal Prediction]\label{ass:cp_lipschitz}
The conformal prediction mapping $\text{CP}$ is Lipschitz continuous w.r.t. the Wasserstein distance. There exists $L_{\text{CP}} \geq 0$ such that for any two trajectory distributions $\mathcal{D}_1$ and $\mathcal{D}_2$,
\begin{equation}\label{eq:lip_cp}
|\text{CP}(\mathcal{D}_1) - \text{CP}(\mathcal{D}_2)| \leq L_{\text{CP}} \, W_1(\mathcal{D}_1, \mathcal{D}_2).
\end{equation}
\end{assumption}
\begin{remark}[Justification of Assumption~\ref{ass:cp_lipschitz}]
The above assumption relates to the shift of the quantile estimation process under changes in the underlying data distribution. The work~\cite[Proposition 2.5]{LP_CP_2025} shows that if the score function $U:\mathcal{Y}\rightarrow\mathbb{R}$ is $k$-Lipschitz, then a data-space distribution shift translates to a corresponding shift in the distribution of the nonconformity scores.
In~\cite[Proposition 3.5]{LP_CP_2025}, the authors provide an expression for the worst-case quantile and coverage using the parameters \(\epsilon\) and $\rho$ of the distribution shift. Thus, under mild assumptions, we can show that $L_{\text{CP}} \leq1$. 
\end{remark}

\begin{assumption}[Lipschitz Continuity of MPC]
\label{ass:mpc_lipschitz}
The MPC map $F$ is Lipschitz continuous w.r.t. changes in the constraint set. There exists $L_{\text{MPC}} \geq 0$ such that for any two constraint sets $\boldsymbol{q}_1$ and $\boldsymbol{q}_2$,
\begin{equation}\label{eq: lip_mpc}
W_1(F(\boldsymbol{q}_1), F(\boldsymbol{q}_2)) \leq L_{\text{MPC}} \, |\boldsymbol{q}_1 - \boldsymbol{q}_2|.
\end{equation}
\end{assumption}
\begin{remark}[Justification of Assumption~\ref{ass:mpc_lipschitz}]
This assumption is about the sensitivity of the MPC's solution (the resulting trajectory distribution) to changes in the constraints. If the MPC cost function is strongly convex and the constraints vary smoothly (without abrupt switching), the optimal solution will change continuously with the parameters (locally)~(\cite{boot1963sensitivity, canovas2025lipschitzCO}).
\end{remark}
\begin{theorem}[Convergence to the fixed-point of the \distributionshift~map]
\label{thm:convergence}
If Assumptions~\ref{ass:cp_lipschitz} and~\ref{ass:mpc_lipschitz} hold and the combined Lipschitz constant satisfies $\delta = L_{\text{MPC}} \cdot L_{\text{CP}} < 1$, then the operator $\mathcal{T} = F \circ \CP$ is a contraction mapping on the space of trajectory distributions equipped with the Wasserstein metric, $(\mathcal{Y}, W_1)$. Consequently, the iteration $\mathcal{D}^{(r+1)} = \mathcal{T}(\mathcal{D}^{(r)})$ converges to a unique fixed-point distribution $\mathcal{D}^* = \lim_{r\rightarrow \infty}\mathcal{D}^{(r)}$.
\end{theorem}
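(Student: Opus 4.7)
The plan is to apply the Banach fixed-point theorem to the operator $\mathcal{T} = F \circ \text{CP}$ on the metric space $(\mathcal{P}(\mathcal{Y}), W_1)$. The two Lipschitz assumptions are tailored so that their composition directly gives a contraction constant, so most of the work is just chaining the bounds together and verifying that the ambient space supports the fixed-point argument.

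First I would establish the contraction property. Taking two arbitrary trajectory distributions $\mathcal{D}_1, \mathcal{D}_2 \in \mathcal{P}(\mathcal{Y})$, I apply Assumption~\ref{ass:cp_lipschitz} to obtain $|\text{CP}(\mathcal{D}_1) - \text{CP}(\mathcal{D}_2)| \leq L_{\text{CP}} \, W_1(\mathcal{D}_1, \mathcal{D}_2)$, and then feed these conformal thresholds into Assumption~\ref{ass:mpc_lipschitz} to obtain
\begin{equation*}
W_1\bigl(F(\text{CP}(\mathcal{D}_1)), F(\text{CP}(\mathcal{D}_2))\bigr) \leq L_{\text{MPC}} \, |\text{CP}(\mathcal{D}_1) - \text{CP}(\mathcal{D}_2)| \leq L_{\text{MPC}} L_{\text{CP}} \, W_1(\mathcal{D}_1, \mathcal{D}_2) = \delta \, W_1(\mathcal{D}_1, \mathcal{D}_2).
\end{equation*}
Since $\delta < 1$ by hypothesis, $\mathcal{T}$ is a strict contraction on $(\mathcal{P}(\mathcal{Y}), W_1)$. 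A minor technicality here is dimensional consistency: $\text{CP}$ outputs a vector $\boldsymbol{q} \in \mathbb{R}^H$, so the absolute value in~\eqref{eq:lip_cp} and~\eqref{eq: lip_mpc} should be read as a norm on $\mathbb{R}^H$; I would note this at the start and use a consistent norm throughout.

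Next I would invoke the Banach fixed-point theorem. This requires that the underlying metric space be complete. Under the standard assumption that $\mathcal{Y}$ is a Polish space (complete separable metric space) and that we restrict attention to distributions with finite first moment, the Wasserstein-1 space $(\mathcal{P}_1(\mathcal{Y}), W_1)$ is itself a complete metric space. I would add this as a mild regularity condition (implicit in the finite-horizon trajectory setting with bounded state and control sets $\mathcal{X}, \mathcal{U}$). Banach's theorem then yields a unique fixed point $\mathcal{D}^* \in \mathcal{P}_1(\mathcal{Y})$ with $\mathcal{D}^* = \mathcal{T}(\mathcal{D}^*)$, and the iterates satisfy $W_1(\mathcal{D}^{(r)}, \mathcal{D}^*) \leq \delta^r \, W_1(\mathcal{D}^{(0)}, \mathcal{D}^*) \to 0$ at a geometric rate.

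The main obstacle is not the contraction computation itself, which is essentially one line, but rather the supporting measure-theoretic setup: ensuring that $\mathcal{T}$ actually maps $\mathcal{P}_1(\mathcal{Y})$ into itself (so iterates stay in the complete space) and that $F$ is well-defined as a single-valued map when the MPC problem may admit multiple minimizers. I would address the first point by noting that trajectories lie in the bounded feasible region induced by $\mathcal{X}$ and the horizon, making all relevant distributions automatically in $\mathcal{P}_1$; and the second by assuming the MPC in~\eqref{eq:open_loop} has a unique minimizer (consistent with the strong-convexity justification already given for Assumption~\ref{ass:mpc_lipschitz}), so that $F$ is a genuine function rather than a set-valued map. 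With these in place, the contraction argument closes cleanly.
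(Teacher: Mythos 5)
Your proof follows essentially the same route as the paper's: chain Assumptions~\ref{ass:cp_lipschitz} and~\ref{ass:mpc_lipschitz} to show $\mathcal{T}$ contracts $W_1$ by the factor $\delta = L_{\text{MPC}} L_{\text{CP}} < 1$, then invoke Banach's fixed-point theorem. The additional points you raise (completeness of $(\mathcal{P}_1(\mathcal{Y}), W_1)$, reading the absolute values as a norm on $\mathbb{R}^H$, and single-valuedness of $F$) are legitimate technicalities that the paper leaves implicit, and your handling of them is sound.
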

\begin{proof}
    When we combine~\eqref{eq:lip_cp} and ~\eqref{eq: lip_mpc}, we get, 
    \begin{equation*}
        W_1(\T(\mathcal{D}_1), \T(\mathcal{D}_2))= W_1\Big(F\big(CP(\mathcal{D}_1)\big), F\big(CP(\mathcal{D}_2)\big)\Big) \leq L_{\text{MPC}}. L_{\text{CP}} \, W_1(\mathcal{D}_1, \mathcal{D}_2).
    \end{equation*}
   If $\delta=L_{\text{MPC}}.L_{\text{CP}}<1$, we can guarantee convergence to a fixed-point by leveraging Banach's Fixed-Point Theorem~(\cite{banach1922operations}). Hence, $\mathcal{T}$ is a contraction mapping with $\mathcal{D}^* = \lim_{r\rightarrow \infty}\mathcal{D}^{(r)}$.
\end{proof}

Since $\D^{*} = \mathcal{T}(\D^{*})$, convergence of the distributions implies convergence of CP-based uncertainty predictions $\boldsymbol{q}^{(r)}$ to $\boldsymbol{q}^*$. This statistic can thus serve as a stopping criterion in practice. Thus, we now require the formulation for \(\boldsymbol{q}^{(r)}\).
\subsection{Prediction Region Construction}\label{Predictive Region Construction}
We now describe how to construct a CP-based prediction to be utilized in the MPC planner~\eqref{eq:open_loop}. Given past observations, $(Y_{0}^{(r)},\dots,Y_t^{(r)})$, \trajpred ~generates an $H$-step forecast, $(\hat{Y}_{t+1|t}^{(r)}, \dots, \hat{Y}_{t+H|t}^{(r)})$. During the calibration phase, for each iteration $r$, we form a dataset of $K$ interaction trajectories, $D_{cal}^{(r)}\coloneqq\{Y^{(r),1},\ldots,Y^{(r),K}\}$ and $Y^{(r),i}\coloneqq\{Y_0^{(r),i},Y_1^{(r),i},\ldots\}\overset{\mathrm{i.i.d.}}{\sim}
\mathcal{D}^{(r-1)}$.
For each $i^{\text{th}}$ calibration trajectory $Y^{(r),i} \in D_{\text{cal}}$, and each horizon step $\kappa$, we define the nonconformity score,
\vspace{-2mm}
\begin{equation}
    U^{(r),i}_{\kappa|t} = \big\| Y^{(r),i}_{t+\kappa} - \hat{Y}^{(r),i
    }_{t+\kappa|t} \big\|_2 \quad \forall \kappa \in \{1,\ldots,H\},~i\in\{1,\ldots,K\},
    \vspace{-2mm}
\end{equation}
which measures the deviation between the predicted and ground truth state. 
We define the maximal nonconformity score across time and across all $N$ agents,
\vspace{-2mm}
\begin{equation}
    U^{(r),i}_\kappa = \max_{t \in \{1, 2,\dots\}} 
   U_{\kappa|t}^{(r),i} \quad \forall \kappa \in\{1,\ldots,H\},~i\in\{1,\ldots,K\}.
   \vspace{-2mm}
\end{equation}
The nonconformity scores that describe the maximal $H$-step prediction error across the entire trajectory, $\{U^{(r),i}\}_{i=1}^K$ are exchangeable. Assuming they are sorted in non-decreasing order, we can apply~\eqref{eq: CP_threshold}  at horizon $\kappa$, to obtain the threshold,
\vspace{-2mm}
\begin{equation}
    \hat{q}^{(r)}_\kappa = U^{(r),p}_\kappa, 
    \qquad 
    p = \left\lceil (1-\epsilon)(K+1) \right\rceil,
    \vspace{-2mm}
\end{equation}
where, $\epsilon \in (0,1)$ is the failure probability. Hence we obtain \(\boldsymbol{\hat{q}}^{(r)}\coloneqq[\hat{q}_1^{(r)},\ldots,\hat{q}_H^{(r)}]\) as the CP-based threshold such that it satisfies~\eqref{eq: CP_guarantee} in a new interaction, i.e., 
\vspace{-2mm}
\begin{equation*}
    \mathbb{P}\bigg[ \max_{t \in \{1, 2,\dots\}}\big\| Y^{(r),\text{test}}_{t+\kappa} - \hat{Y}^{(r),\text{test}}_{t+\kappa|t} \big\|_2 \leq \hat{q}^{(r)}_\kappa\bigg] \geq 1 - \epsilon, \quad \forall \kappa \in\{1,\ldots,H\},\,\, Y^{(r),\text{test}}\overset{\mathrm{i.i.d.}}{\sim}\mathcal{D}^{(r-1)}.
    \vspace{-2mm}
\end{equation*}
We can utilize the above CP-based prediction set in the safety constraint of our MPC problem, i.e., $\pi_{\text{MPC}}(\boldsymbol{\hat{q}}^{(r)})$ which will ensure that the ego-agent plans safely using imperfect predictions with probability $1-\epsilon$~\cite[Theorem 3]{2023LindemannConformal}.
Our method is compatible with any CP-based frameworks such as those by~\cite{Cauchois_2024, cleaveland2024conformalpredictionregionstime}.



\subsection{Smoothing Filter for Prediction Set}\label{sec:smooth}
As established in Section \ref{endodistributionshift}, at the fixed-point $\mathcal{D}^*$, we have a corresponding calibration threshold from $\CP$ given by $\boldsymbol{q}^*$ and a stable, fixed policy, $\pi_{\text{MPC}}(\boldsymbol{q^*})$. To ensure stable convergence to this fixed-point, we constrain the distribution shift over the iterations by applying a low-pass filter to the $\boldsymbol{\hat{q}}^{(r)}$ update. This reduces the short-term fluctuations in the distribution shift across iterations.


We apply a smoothing filter on the prediction set update by filtering the calibration threshold. Given a new calibration threshold computed at iteration $r$, $\boldsymbol{\hat{q}}^{(r+1)}$, using calibration data sampled from $\mathcal{D}^{(r)}$, the filtered threshold is given by,
\vspace{-2mm}
\begin{equation}
    \boldsymbol{q}^{(r+1)} \;=\; (1-\gamma)\,\boldsymbol{q}^{(r)}
    \;+\; \gamma\,\boldsymbol{\hat{q}}^{(r+1)},
    \vspace{-2mm}
    \label{eq:radius_update}
\end{equation}
where, $\gamma\in[0,1]$ is a hyperparameter to determine the extent of smoothness. This iterative smoothing serves to ensure that the induced trajectory distributions $\mathcal{D}^{(r)}$, corresponding to executing $\pi_{\text{MPC}}(\boldsymbol{q}^{(r)})$, remain close to the preceding distributions $\mathcal{D}^{(r-1)}$. Thus, we constrain the system to evolve gradually and facilitate convergence to the fixed-point.

Algorithm~\ref{alg:iterative_cp} summarizes our iterative adaptation framework. Theorem \ref{thm:convergence} guarantees asymptotic convergence of Algorithm~\ref{alg:iterative_cp} to a fixed-point where, $\lim_{r\rightarrow\infty}\lVert\boldsymbol{q}^{(r+1)} - \boldsymbol{q}^{(r)}\rVert_2 = 0$. However, practical deployment in requires a stopping criterion to detect when steady state is approximately reached. We therefore consider the following stopping condition with a small constant $\phi$\footnote{For simplicity, we have considered the interaction between two agents, i.e., $N=1$, but note that we can apply the same updates for all non-ego agent ($N>1$) individually. We update the stopping criteria to \(\max_{j\in\{0,\ldots,N\}} \big\| \boldsymbol{q}^{(r+1)}_j - \boldsymbol{q}^{(r)}_j \big\| \leq \phi\).},
\vspace{-2mm}
\begin{equation}\label{eqn:convergence}
    \big\| \boldsymbol{q}^{(r+1)} - \boldsymbol{q}^{(r)} \big\|_2 \leq \phi.
    \vspace{-2mm}
\end{equation}
Hence, our framework described in Algorithm~\ref{alg:iterative_cp}, is able to converge to a fixed-point, under the assumptions listed in Theorem~\ref{thm:convergence}, and guarantee probabilistic safety (with probability $1-\epsilon$) of the MPC policy $\pi_{\text{MPC}}(\boldsymbol{q}^*)$ in the presence of other reactive agents.

\begin{algorithm}[t]
\caption{Iterative Conformal Prediction for Multi-Agent Interactive Systems}
\label{alg:iterative_cp}
\begin{algorithmic}[1]
\REQUIRE Failure probability $\epsilon$, smoothing parameter $\gamma$, stopping criteria $\phi$, prediction horizon $H$
\STATE Initialize $r \leftarrow 0$; $\boldsymbol{q}^{(0)} \leftarrow \boldsymbol{0}$; $p \leftarrow \lceil (K+1)(1-\epsilon)\rceil$

\REPEAT
    \STATE Deploy \(\pi_{\text{MPC}}(\boldsymbol{q}^{(r)})\) with \trajpred ~to obtain predictions
    \STATE Collect trajectories for $D_{cal}^{(r)}$ 
    \FOR{$\kappa$ from $1$ to $H$}
        \STATE 
        
        $U^{(r,i)}_{\kappa} \leftarrow \max_t \big\|Y^{(r),i}_{t+\kappa} - \hat{Y}^{(r),i}_{t+\kappa|t}\big\|_2
        $ for each $Y^{(r),i} \in D_{cal}^{(r)}$
        \STATE Apply~\eqref{eq: CP_threshold} where $\hat{q}^{(r+1)}_{\kappa} = U^{(r),p}_{\kappa}$
    \ENDFOR

    \STATE $\boldsymbol{q}^{(r+1)} \leftarrow (1-\gamma). \boldsymbol{q}^{(r)} + \gamma . \boldsymbol{\hat{q}}^{(r+1)}$
    \STATE $\Delta q = \| \boldsymbol{\hat{q}}^{(r+1)} - \boldsymbol{q}^{(r)} \|_2$
    \STATE $r \leftarrow r+1$
\UNTIL{$\Delta q \leq \phi$}

\RETURN $\boldsymbol{q}^{(r)}$
\end{algorithmic}
\end{algorithm}

\vspace{-5mm}
\section{Results}

In this section, we apply our iterative CP framework in two- and three-agent interactions. We consider that all agents can react to each other using the policy $\pi_{\text{MPC}}$ where each agent makes trajectory predictions about the other agents based on past observations. The \trajpred~method considered is a long short-term memory (LSTM) model trained on historical agent states, chosen for scalability with time-series data and limited training samples. Agents implement model predictive control (MPC) for collision avoidance, with the ego agent integrating CP uncertainty regions around other agents. Further details regarding the experimental setup is included in Appendix \ref{appendix:env_setup}. We compare four approaches:

\vspace{-5pt}
\setlength{\leftmargini}{15pt} 
\begin{enumerate}
    \item \textbf{No Conformal Prediction (NCP):}\label{NCP}  No uncertainty quantification is applied, \emph{i.e.}, $\pi_{\text{MPC}}(\boldsymbol{0})$.
    \vspace{-7pt}
    \item \textbf{Baseline Conformal Prediction (BCP):}\label{BCP} Trajectories are collected from decentralized multi-agent control with interactive agents. A trajectory predictor is fine-tuned on these trajectories (using $D_{tune}^{\text{BCP}}$ trajectories) , and a calibration dataset \(D_{cal}^{\text{BCP}}\) is constructed for uncertainty calibration once, i.e., $r=1$, similar to the method proposed by~\cite{2023LindemannConformal}.
    \vspace{-7pt}
    \item \textbf{Iterative Conformal Prediction (ICP, Ours):}\label{ICP} Utilizing the same fine-tuned predictor as BCP, the iterative CP framework is executed until convergence~\eqref{eqn:convergence}. For fairness, the total number of trajectories matches that of BCP but is distributed across iterations, i.e., $|D_{cal}^{\text{BCP}}| = r.K$.
    \vspace{-7pt}
    \item \textbf{Iterative Split Conformal Prediction (ISCP, Ours):}\label{ISCP} The Iterative Split Conformal Prediction framework is based on split-conformal prediction as shown in ~\cite{Tibshirani2023Conformal} where the predictor is retrained and prediction sets are calibrated iteratively. The predictor is trained with a dataset size as the predictor trained in BCP, i.e, $|D_{tune}^{\text{BCP}}| = r.K_{tune}$ and $|D_{cal}^{\text{BCP}}| = r.K$. This framework is applied until the convergence criteria~\eqref{eqn:convergence} is satisfied.
\end{enumerate}
\vspace{-4mm}
\subsection{Two Agent Simulations}\label{2_agent_results}
\vspace{-2mm}
\begin{figure}[h]
    \centering
    \includegraphics[width=1\linewidth]{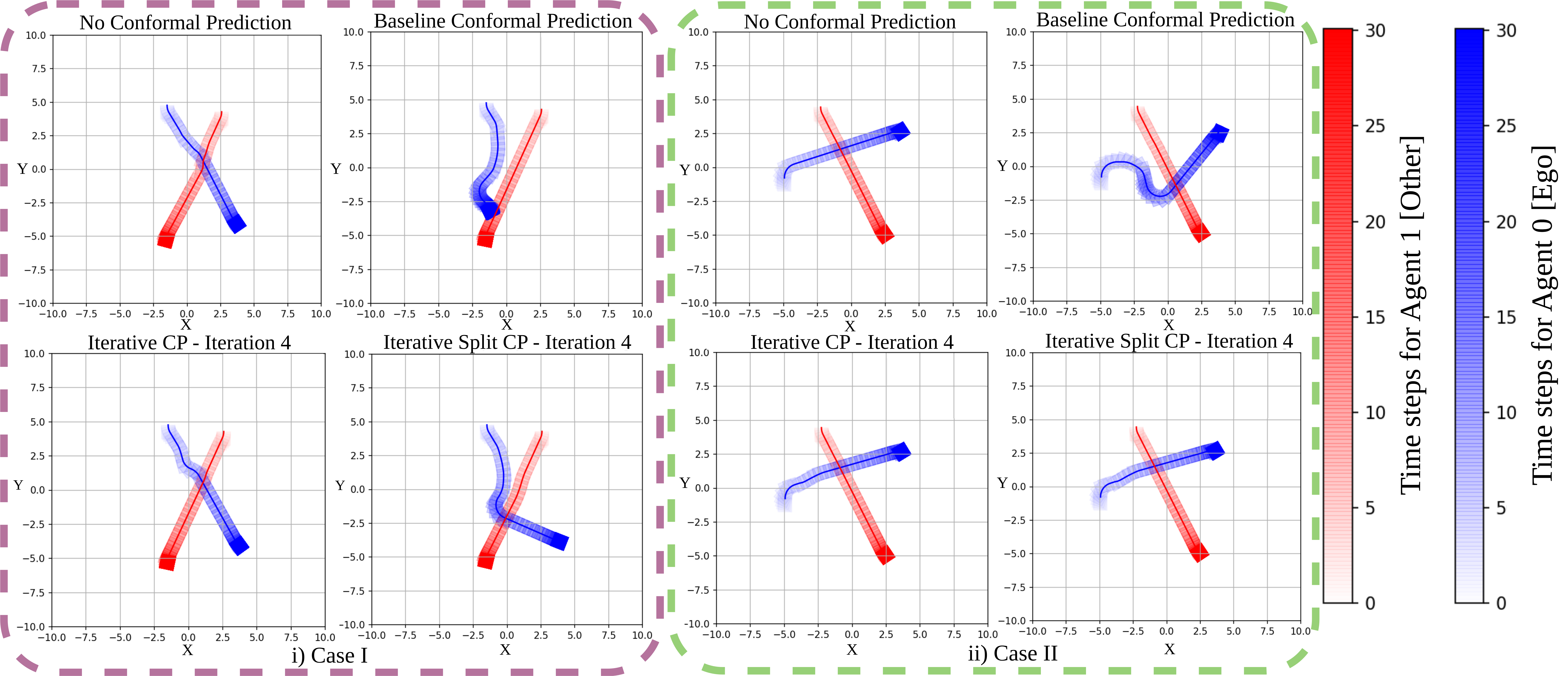}
    \caption{\textit{Comparison for $2$ agents across test cases}: Our iterative methods produce trajectories closely matching those without conformal prediction. In Case I, the baseline conformal prediction (BCP) method leads to a deadlock, while in Case II, BCP exhibits significant trajectory deviation.}
    \label{fig:case1}
    \vspace{-2mm}
\end{figure}

\begin{table*}
\centering
\caption{Performance metrics comparison for 2 Agents}
\label{tab:combined_results_2_agents}
\renewcommand{\arraystretch}{0.6}
\begin{tabular}{lccccc c}
\toprule
\textbf{Method} & \textbf{Collision} & 
\multicolumn{2}{c}{\textbf{Deviation from Ref. Path}} &
\textbf{Misdetection} & \textbf{Avg Nav} & \textbf{Success} \\
\cmidrule(lr){3-4}
 & \textbf{(in \%)} & \textbf{Ego Agent} & \textbf{Other Agent} & \textbf{(in \%)} & \textbf{Time(in s)} & \textbf{(in \%)} \\
\midrule
\basefirst & 43.00 & 0.12 & 0.12 & - & \textbf{6.10} & 57.00 \\
\basesec   & \textbf{0.00}  & 0.63 & 0.08 & 0.30 & 7.64 & 90.50 \\
\cp        & 0.50  & 0.32 & 0.08 & 12.35 & 6.67 & 96.50 \\
\scp       & 1.50  & \textbf{0.29} & 0.08 & \textbf{4.10} & 6.83 & \textbf{97.00} \\
\bottomrule
\vspace{-2mm}
\end{tabular}
\end{table*}
We consider a two-agent scenario where the ego agent employs conformal prediction while the other agent uses a nominal MPC controller without uncertainty quantification, \emph{i.e.}, $\pi_{\text{MPC}}(\boldsymbol{0})$. For conformal prediction, following~\eqref{eq:CP_Prediction} the failure (and misdetection) probability $\epsilon=0.15$ with $\delta=0.01$. The smoothing parameter $\gamma=0.8$ for ICP and $\gamma = 0.9$ for ISCP, and the stopping criteria $\phi=0.1$. For both baseline conformal prediction (BCP) and ICP, the predictor is trained on $|D_{tune}^{\text{BCP}}|=1000$ trajectories and with $|D_{cal}^{\text{BCP}}|=1000$. The ICP method uses $K=250$ for each iteration. For ISCP, $K_{tune}=250, \, K = 250$ trajectories are used per iteration for retraining the predictor and calibration respectively. The ISCP method converges in $r=4$ iterations while the ICP method converges in $r=4$ iterations. Please refer to Appendix \ref{appendix: 2-agent} for further details.

We use $200$ test trajectories for obtaining the results reported in Table \ref{tab:combined_results_2_agents} (the evaluation metrics used are explained in Appendix \ref{Appendix: Evaluation_Metrics}). The results show NCP leads to most collisions and unsafe behavior, while BCP, which does not split the calibration budget into iterations and uses the entire budget in one iteration,  exhibits overly conservative behavior with larger ego-agent path deviations and reduced success rates due to deadlocks, as demonstrated in Fig.~\ref{fig:case1}. In contrast, our ICP method adaptively updates the agent behavior towards a stable configuration, resulting in reduced deviations from the reference trajectory while still satisfying the desired safety levels. Moreover, our ISCP method, which retrains the predictor alongside iterative calibration, yields higher success rates, smaller ego-agent path deviations, and also maintains misdetections within the target bounds.

\vspace{-2mm}
\subsection{Three Agent Simulations}

We consider a three-agent scenario in which all agents, in contrast to the two-agent setting discussed in Section~\ref{2_agent_results}, incorporate CP-based uncertainty predictions within their MPC planners. We use the misdetection probability $\epsilon=0.10$ with \(\delta=0.01\), $\gamma=0.9$, and $\phi=0.2$. For BCP, $|D_{tune}^{\text{BCP}}|=1000$ and $|D_{cal}^{\text{BCP}}|=1000$. The ICP uses $K=250$ for each iteration for calibration and ISCP uses $K_{tune}=250$ and $K=250$. The ICP and ISCP methods converge in $r=4$ iterations. However, we note that if we run ICP for more iterations, the prediction sets start changing again, indicating that the  fixed-point for ICP is unstable. Please see Appendix~\ref{appendix: 3-agent} for more details and experiments.

The results, summarized in Table~\ref{tab:combined_results_3_agent}, demonstrate that BCP, now overly optimistic in its behavior, fails to meet the desired misdetection coverage. This behavior contrasts with the two-agent case. In both cases, the unmodeled distribution shift causes the baseline to be either too optimistic or too conservative. When we instead split the budget across multiple iterations, both ICP and ISCP show better performance by satisfying the desired collision guarantees. We see that much like the 2-agent case, ISCP gives us improved performance across most metrics.
\begin{table*}
\centering
\caption{Performance metrics comparison for 3 Agents}
\label{tab:combined_results_3_agent}
\renewcommand{\arraystretch}{0.6}
\begin{tabular}{lccccc c}
\toprule
\textbf{Method} & \textbf{Collision} & 
\multicolumn{2}{c}{\textbf{Deviation from Ref. Path}} &
\textbf{Misdetection} & \textbf{Avg Nav} & \textbf{Success} \\
\cmidrule(lr){3-4}
 & \textbf{(in \%)} & \textbf{Ego Agent} & \textbf{Other Agent} & \textbf{(in \%)} & \textbf{Time(in s)} & \textbf{(in \%)} \\
\midrule
\basefirst & 35.17 & 0.17 & 0.23 & - & \textbf{8.72} & 24.50 \\
\basesec   & 2.33  & 0.65 & 0.72 & 40.40 & 11.04 & 88.50 \\
\cp        & 1.00  & 1.04 & 1.25 & 11.03 & 13.29 & 87.50 \\
\scp       & \textbf{0.00}  &\textbf{ 0.49 }& \textbf{0.50} & \textbf{7.10} & 11.90 & \textbf{97.00} \\
\bottomrule
\end{tabular}
\vspace{-2mm}
\end{table*}
\vspace{-4mm}
\section{Conclusion and Future Work}
We introduce an iterative CP framework for multi-agent reactive systems under \distributionshift. By iteratively recalibrating prediction regions, the proposed method achieves desired coverage guarantees while adapting to the distribution shift. We note, however, that the use of smoothing filters for prediction sets requires some hyperparameter tuning.  Future research will focus on developing adaptive strategies to mitigate hyperparameter sensitivity and exploring alternative statistical techniques beyond CP to describe the \distributionshift~such as using the statistical properties inherent to certain generative learning modules. Finally, we seek to deploy the theory presented on physical robotic platforms for real-world applications.
\vspace{-2mm}
\acks{The authors would like to thank Debajyoti Chakrabarti for his valuable insights and Professor Saumik Bhattacharya for supporting the inclusion of this work in the author’s Master’s thesis.}
\bibliography{main}
\clearpage
\appendix
\section{Experimental Details}

\subsection{Additional Details on System Dynamics and Environment}\label{appendix:env_setup}
For all the experiments, we consider the bicycle model (\cite{2006_Vehicle_Model}):
\begin{equation}
\begin{bmatrix}
x_{t+1} \\
y_{t+1} \\
\theta_{t+1} \\
v_{t+1}
\end{bmatrix}
=
\begin{bmatrix}
x_t + \Delta v_t \cos(\theta_t) \\
y_t + \Delta v_t \sin(\theta_t) \\
\theta_t + \dfrac{\Delta v_t}{l} \tan(\phi_t) \\
v_t + \Delta a_t
\end{bmatrix}
\end{equation}
where $p_t\coloneq (x_t, y_t)$ denotes the position in 2-dimensional cartesian-plane, $\theta_t$ is the vehicle's orientation, $v_t$ is the velocity of the vehicle, $l\coloneq1$ is the vehicle's length and $\Delta\coloneq0.1s$ is the sampling time. The control inputs for this model are $\phi_t$, which denotes the steering angle, and $a_t$, which denotes the acceleration of the vehicle.

In all the cases, the objective for the $0^{th}$ agent is to reach a goal region while avoiding the other reactive agents using the constraint function:
\begin{equation}
    c(p_{t+\kappa|t}, \hat{Y}_\kappa) \coloneq  \| p_{t+\kappa|t} - \hat{Y_{\kappa,j}}\| - \epsilon \quad 
\end{equation}
Here, $\epsilon$ is the safe distance to be considered between the agents. While this formulation has been defined for the ego-agent, it is extended to all the other agents as well, with the constraint to avoid all the other agents, including the ego-agent. The center of the goal region is considered as $x_{g}$ with zero velocity at goal point, \emph{i.e.}, $v_g=0$. 

The MPC Cost Function $J(.)$, which we define for the experiments, is the standard goal-reaching MPC formulation as shown:
\begin{equation}
    J(x_{\tau},u_{\tau}) \coloneq (x_{\tau}-x_g)^TQ(x_{\tau}-x_g) + u_\tau^TRu_\tau+0.01*v_\tau 
\end{equation}
where, \(Q=diag([1, 1, 0.001, 0.1])\) and $R=diag([0.001,0.01])$. Here, $diag([.])$ is used to represent a diagonal matrix whose diagonal elements are provided. We also consider a terminal constraint ($Q_T=diag([5.0,5.0,0.01,5.5])$). To solve the MPC optimization problem, we use CasADi (\cite{Casadi}) with Ipopt non-linear programming solver.
\subsection{Evaluation Metrics}\label{Appendix: Evaluation_Metrics}
The evaluation metrics used in Table \ref{tab:combined_results_2_agents} and \ref{tab:combined_results_3_agent} are explained below:
\begin{enumerate}
    \item \textbf{Collision Rate:} Percentage of trajectories in which at least one agent experiences a collision at any time, measuring the safety of the conformal prediction framework.
    \item \textbf{Misdetection:} Percentage of prediction instances where the ground-truth position of the agent lies outside the CP-based uncertainty set, computed across all agents, prediction horizons, and test samples, assessing the statistical validity of the uncertainty quantification. This metric provides a measure of the extent of distributional shift that occurs when the misdetection rate computed from the test trajectories varies from the misdetection rate imposed by the calibration dataset. 
    \item \textbf{Success Rate:} The percentage of trajectories in which the ego agent successfully reaches its designated goal within the specified time limit of 30 seconds and in which no collision occurs throughout the entire trajectory. This metric captures the effectiveness of the conformal prediction framework to avoid collisions and simultaneously to ensure over-conservativeness of the CP-based prediction frameworks, leading to overly conservative prediction regions.
    \item \textbf{Average Navigation Time:} The mean time taken by the agents that use CP-based methods to reach their goal position for all the successful attempts. Unsuccessful attempts are not considered in this calculation.
    \item \textbf{Deviation from Reference Path:} The deviation of the agents' actual trajectories from their nominal reference paths. This metric quantifies the size of the CP-based uncertainty predictions by measuring the distance between the executed trajectories and the planned reference.
\end{enumerate}
\subsection{Additional Details for Two Agent Simulations}\label{appendix: 2-agent}
In the two-agent environment, initial positions and goal locations are randomly initialized on a circle of diameter $10$, with a bias towards diagonal opposition and random displacement, subject to a minimum distance constraint to prevent initial collisions. This simulation setup ensures diverse, non-overlapping start and end configurations for both agents.

For the Iterative Split Conformal Prediction (ISCP) framework, iteration 0 begins with agents ignoring other agents, leading to frequent collisions, and serves as a base case for predictor training. In the Table \ref{tab:Results_2_agents_ISCP}, we show how the performance metrics vary over iterations.  Table~\ref{tab:Results_2_agents_ISCP} summarizes performance metrics over ISCP iterations, showing adaptation of the CP-based uncertainty sets to improve success rates and reduce path deviation, while maintaining statistical guarantees. Table~\ref{tab:misdetections_2_agent_iscp} records misdetection rates for each prediction step across iterations.
\begin{table}[H]
\centering
\caption{Performance metrics over iterations of the ISCP framework}
\label{tab:Results_2_agents_ISCP}
\renewcommand{\arraystretch}{0.8}
\begin{tabular}{ccccccc}
\toprule
\textbf{Iteration} & \textbf{Collision} & 
\multicolumn{2}{c}{\textbf{Deviation from Ref. Path}} &
\textbf{Misdetection} & \textbf{Avg Nav} & \textbf{Success} \\
\cmidrule(lr){3-4}
$r$ & \textbf{(in \%)} & \textbf{Ego Agent} & \textbf{Other Agent} & \textbf{(in \%)} & \textbf{Time(in s)} & \textbf{(in \%)} \\
\midrule
0 & 0.00 & 0.00 & - & - & 5.80 & 20.00 \\ 
1 & 3.00 &	0.56 &	0.11 & 16.00 &	7.10 & 88.00 \\ 
2 & 1.00 &	0.45 &	0.09 &  5.90 &	7.20 & 93.50 \\ 
3 & 0.50 &	0.40 &	0.07 &  3.00 &	6.89 & 94.50 \\ 
4 & 1.50 &	0.29 &	0.08 &  4.10 &	6.83 & 97.00 \\ 
\bottomrule
\end{tabular}
\end{table}

\begin{table}[H]
\centering
\caption{Misdetection Rate for 2 Agents of ISCP over $H=10$ prediction steps}
\label{tab:misdetections_2_agent_iscp}
\renewcommand{\arraystretch}{1.2} 
{\scriptsize 
\begin{tabular}{lccccccccccc}
\toprule
\textbf{r} & \textbf{All Steps} & \textbf{Step 1} & \textbf{Step 2} &
\textbf{Step 3} & \textbf{Step 4} & \textbf{Step 5} & \textbf{Step 6} &
\textbf{Step 7} & \textbf{Step 8} & \textbf{Step 9} & \textbf{Step 10} \\
\midrule
1 & 16.00 & 14.50 &	15.50 &	14.50 &	13.50 & 15.50 &	17.50 &	16.50 &	17.50 &	17.50 &	17.50 \\
2 &  5.90 &  6.00 &	 5.50 &	 5.00 &	 5.00 &	 5.00 &	 6.00 &	 6.00 &	 6.50 &	 7.00 &	 7.00 \\
3 &  3.00 &  4.00 &	 1.50 &	 2.00 &	 2.50 &	 3.00 &	 3.50 &	 3.00 &	 3.50 &	 3.50 &	 3.50 \\
4 & 4.10 &  3.50 &	 4.00 &	 5.00 &	 4.50 &	 4.00 &	 4.00 &	 4.00 &	 4.00 &	 4.00 &	 4.00 \\
\bottomrule
\end{tabular}
}
\end{table}

For the Iterative Conformal Prediction (ICP) framework, agents at iteration 0 plan collision-free trajectories without CP-based uncertainty predictions (i.e., with $\boldsymbol{q}^{(r)}=\boldsymbol{0}$). Table~\ref{tab:Results_2_agents_ICP} presents corresponding performance metrics across ICP iterations, while Table~\ref{tab:misdetections_2_agent_icp} reports misdetections. Notably, the misdetection rate is generally higher for ICP compared to ISCP, as tighter uncertainty sets are calibrated from available predictor outputs.

\begin{table}[H]
\centering
\caption{Performance metrics over iterations of the ICP framework}
\label{tab:Results_2_agents_ICP}
\renewcommand{\arraystretch}{0.6}
\begin{tabular}{ccccccc}
\toprule
\textbf{Iteration} & \textbf{Collision} & 
\multicolumn{2}{c}{\textbf{Deviation from Ref. Path}} &
\textbf{Misdetection} & \textbf{Avg Nav} & \textbf{Success} \\
\cmidrule(lr){3-4}
$r$ & \textbf{(in \%)} & \textbf{Ego Agent} & \textbf{Other Agent} & \textbf{(in \%)} & \textbf{Time(in s)} & \textbf{(in \%)} \\
\midrule
0 & 42.50 &	0.13 & 0.12 & -     & 6.04 & 57.50   \\
1 &  0.00 &	0.59 & 0.09 &  0.20 & 7.56 & 92.00   \\ 
2 & 6.50 &	0.30 & 0.08 & 16.60 & 6.68 & 90.50   \\
3 & 2.50 &	0.36 & 0.08 & 12.25 & 6.78 & 93.50   \\
4 & 0.50 &	0.32 & 0.08 & 12.35 & 6.67 & 96.50   \\
\bottomrule
\end{tabular}
\end{table}

\begin{table}[H]
\centering
\caption{Misdetection Rate for 2 Agents of ICP over $H=10$ prediction steps}
\label{tab:misdetections_2_agent_icp}
\renewcommand{\arraystretch}{1.2} 
{\scriptsize 
\begin{tabular}{lccccccccccc}
\toprule
\textbf{r} & \textbf{All Steps} & \textbf{Step 1} & \textbf{Step 2} &
\textbf{Step 3} & \textbf{Step 4} & \textbf{Step 5} & \textbf{Step 6} &
\textbf{Step 7} & \textbf{Step 8} & \textbf{Step 9} & \textbf{Step 10} \\
\midrule
1 & 0.20 & 0.00 & 0.0 & 0.0 & 0.0 & 0.50 & 1.00 & 0.50 & 0.0 & 0.0 & 0.0 \\
2 & 16.60 & 14.50 & 17.50 & 17.50 & 17.50 & 18.00 & 16.50 & 16.00 & 16.00 & 16.00 & 16.50 \\
3 & 12.25 & 9.00 & 11.00 & 12.00 & 12.50 & 12.50 & 13.00 & 12.50 & 13.00 & 13.50 & 13.50 \\
4 & 12.35 & 9.50 & 12.00 & 13.00 & 13.00 & 12.00 & 12.50 & 12.50 & 13.00 & 13.00 & 13.00 \\
\bottomrule
\end{tabular}
}
\end{table}

Plots of CP-based uncertainty predictions $\boldsymbol{q}^{(r)}$ over the prediction horizon for ISCP and ICP are shown in Figure~\ref{fig:C_radii_2_agent}.

\begin{figure}[htbp]
  \centering
  \begin{minipage}[b]{0.48\textwidth}
    \centering
    \includegraphics[width=\textwidth]{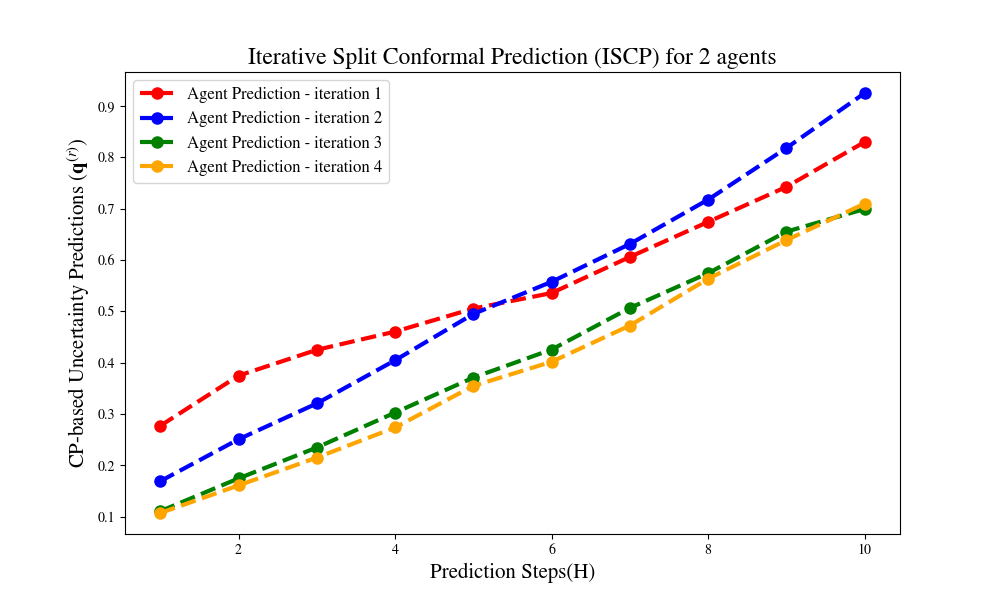}
    \captionof{figure}{ISCP: $\boldsymbol{q}^{(r)}$ over H}
    \label{fig:minipage1}
  \end{minipage}
  \hfill
  \begin{minipage}[b]{0.48\textwidth}
    \centering
    \includegraphics[width=\textwidth]{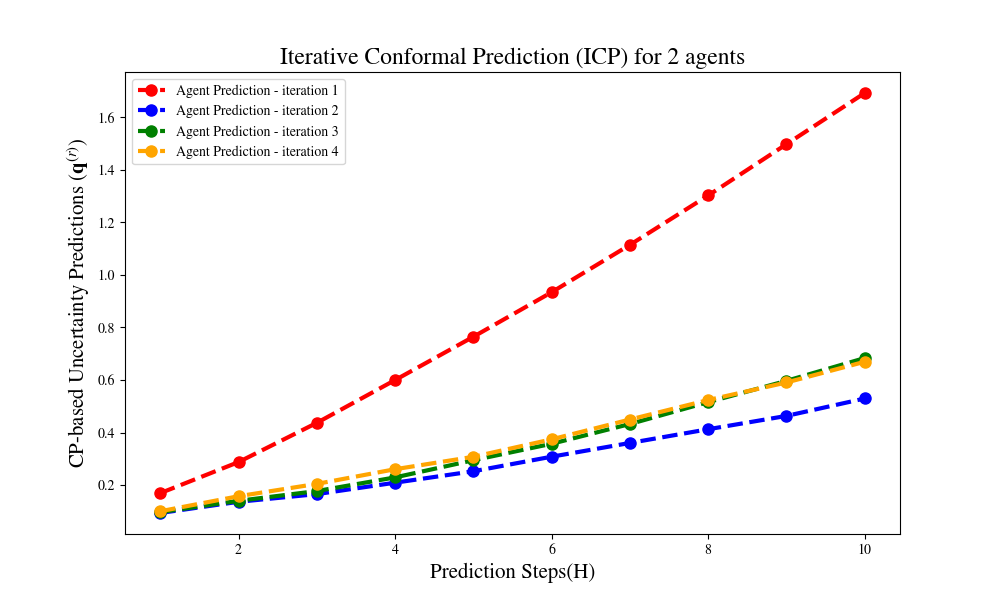}
    \captionof{figure}{ICP: $\boldsymbol{q}^{(r)}$ over H}
    \label{fig:minipage2}
  \end{minipage}
  \caption{CP-based uncertanity predictions for 2 agents}
  \label{fig:C_radii_2_agent}
\end{figure}

\subsection{Additional Details for Three Agent Simulations}\label{appendix: 3-agent}
The initial and goal points are obtained in a similar manner as explained in \ref{appendix: 2-agent}. We, however, consider the diameter of the circle as $15$ to accommodate more agents.

In the Table \ref{tab:Results_3_agents_ISCP}, we show how the performance metrics vary over iterations when all the agents deploy the MPC planner with CP-based uncertainty predictions. We observe that over iterations, the CP-based uncertainty sets update to increase the success rate and reduce deviations, while maintaining statistical guarantees. In Table \ref{tab:misdetections_2_agent_iscp}, we notice how the mis-detections (in \%) change over the iterations for each of the $H$ prediction steps.
\begin{table}[H]
\centering
\caption{Performance metrics over iterations of the ISCP framework}
\label{tab:Results_3_agents_ISCP}
\renewcommand{\arraystretch}{0.8}
\begin{tabular}{ccccccc}
\toprule
\textbf{Iteration} & \textbf{Collision} & 
\multicolumn{2}{c}{\textbf{Deviation from Ref. Path}} &
\textbf{Misdetection} & \textbf{Avg Nav} & \textbf{Success} \\
\cmidrule(lr){3-4}
$r$ & \textbf{(in \%)} & \textbf{Ego Agent} & \textbf{Other Agent} & \textbf{(in \%)} & \textbf{Time(in s)} & \textbf{(in \%)} \\
\midrule
0 &	58.17 &	0.00 &	0.00 &	    -    &	 7.31 &	19.50 \\  
1 &	 5.83 &	0.31 &	0.28 & 71.25 &	 9.72 &	88.00  \\
2 &	 0.33 &	0.64 &	0.52 & 26.80 &	13.06 &	91.00  \\
3 &	 0.00 &	0.72 &	0.57 & 12.25 &	12.30 &	96.50  \\
4 &	 0.00 &	0.49 &	0.50 & 7.10 &	11.90 &	97.00  \\
\bottomrule
\end{tabular}
\end{table}

\begin{table}[H]
\centering
\caption{Misdetection Rate for 3 Agents of ISCP over $H=10$ prediction steps}
\label{tab:misdetections_3_agent_iscp}
\renewcommand{\arraystretch}{1.2} 
{\scriptsize 
\begin{tabular}{lccccccccccc}
\toprule
\textbf{r} & \textbf{All Steps} & \textbf{Step 1} & \textbf{Step 2} &
\textbf{Step 3} & \textbf{Step 4} & \textbf{Step 5} & \textbf{Step 6} &
\textbf{Step 7} & \textbf{Step 8} & \textbf{Step 9} & \textbf{Step 10} \\
\midrule
1 & 71.25 & 32.83 & 40.17 & 50.67 & 62.17 & 75.50 & 82.00 & 88.00 & 92.83 & 94.00 & 94.33 \\
2 & 26.80 & 19.50 & 25.50 & 28.67 & 31.50 & 31.50 & 27.00 & 25.33 & 25.83 & 26.50 & 26.67 \\
3 & 12.25 & 8.50 & 13.67 & 13.67 & 14.50 & 15 & 12.83 & 11.67 & 10.83 & 10.67 & 11.17 \\
4 & 7.10 & 1.83 & 5.00 & 5.50 & 6.50 & 7.33 & 8.67 & 8.67 & 8.50 & 9.50 & 9.50 \\
\bottomrule
\end{tabular}
}
\end{table}

From the Figure \ref{fig:C_radii_3_agent_ISCP}, we observe the converging behavior for the CP-based uncertainty sets with the ISCP framework where $\gamma=0.9$. We observe that the predictor and the CP-based uncertainty sets converge to a more stable behavior for all the agents, with misdetection coverage guarantees satisfied.
\begin{figure}[htbp]
  \centering
  \begin{minipage}[b]{0.325\textwidth}
    \centering
    \includegraphics[width=\linewidth]{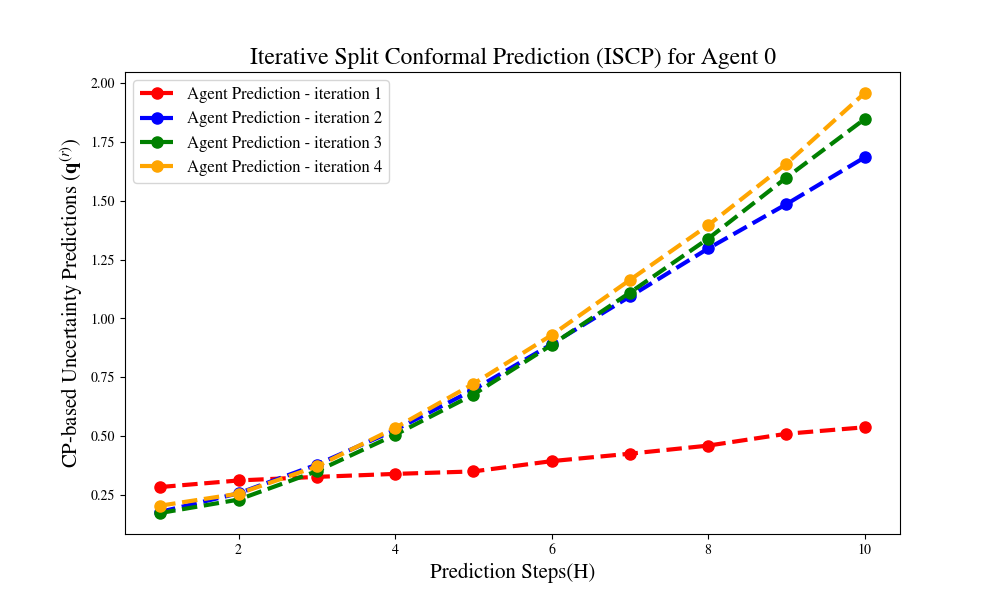}
    \captionof{figure}{Agent 0}
    \label{fig:minipage1}
  \end{minipage}
  \begin{minipage}[b]{0.325\textwidth}
    \centering
    \includegraphics[width=\linewidth]{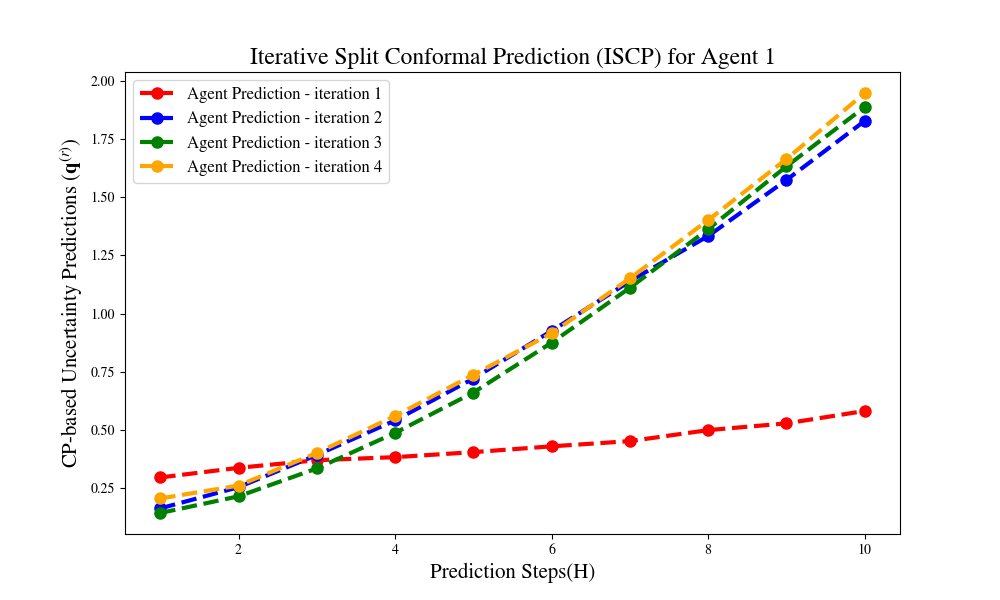}
    \captionof{figure}{Agent 1}
    \label{fig:minipage2}
  \end{minipage}
  \begin{minipage}[b]{0.325\textwidth}
    \centering
    \includegraphics[width=\linewidth]{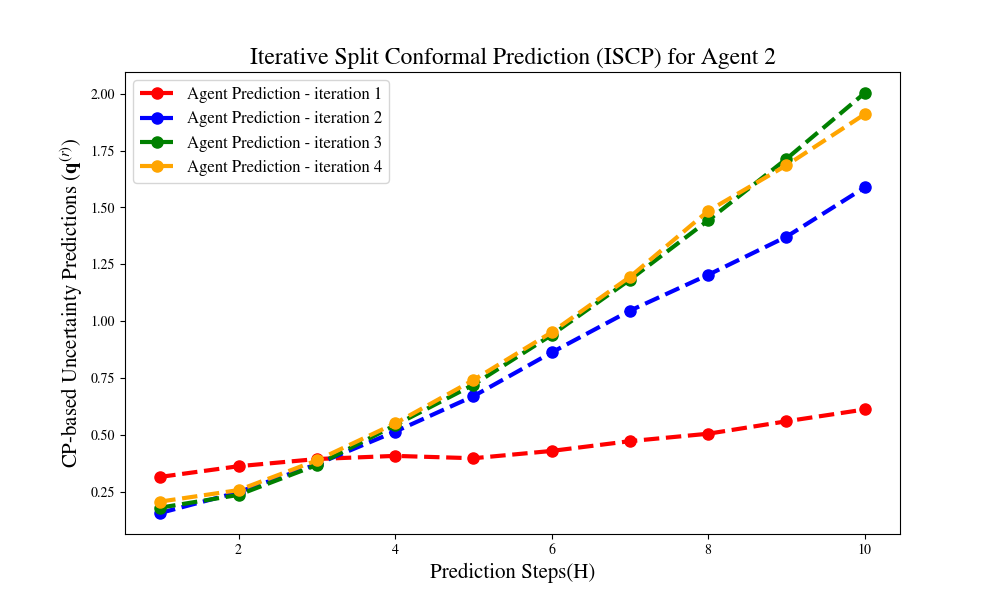}
    \captionof{figure}{Agent 2}
    \label{fig:minipage3}
  \end{minipage}
  
  \caption{ISCP: CP-based Uncertainty Predictions over Prediction Steps for 3 agents}
  \label{fig:C_radii_3_agent_ISCP}
\end{figure}

Table~\ref{tab:Results_3_agents_ICP} reports the performance metrics of the ICP framework over multiple iterations for a 3-agent environment. The CP-based uncertainty sets, \(\boldsymbol{q}^{(r)}\), adapt progressively to increase the success rate while aiming to satisfy the required misdetection coverage. Notably, iteration 2 achieves the targeted misdetection coverage with an improved success rate and reduced collision frequency. Subsequent iterations maintain comparable performance, though slight fluctuations in deviation and misdetection rates are observed.

\begin{table}[H]
\centering
\caption{Performance metrics over iterations of the ICP framework for 3 agents}
\label{tab:Results_3_agents_ICP}
\renewcommand{\arraystretch}{0.8}
\begin{tabular}{ccccccc}
\toprule
\textbf{Iteration} & \textbf{Collision} & 
\multicolumn{2}{c}{\textbf{Deviation from Ref. Path}} &
\textbf{Misdetection} & \textbf{Avg Nav} & \textbf{Success} \\
\cmidrule(lr){3-4}
$r$ & \textbf{(in \%)} & \textbf{Ego Agent} & \textbf{Other Agent} & \textbf{(in \%)} & \textbf{Time(in s)} & \textbf{(in \%)} \\
\midrule
0 &	32.83 &	0.16 & 0.21	&	    - &	 8.65 & 27.50 \\  
1 &	 2.00 &	0.81 & 0.76	&	40.90 &	13.37 & 87.50 \\  
2 &	 1.50 &	0.86 & 0.99	&	 9.78 &	12.94 & 89.50 \\  
3 &	 2.33 &	1.10 & 0.96	&	11.90 &	15.00 & 81.00 \\  
4 &	 1.00 &	1.04 & 1.25	&	11.03 &	13.29 & 87.50 \\ 
\bottomrule
\end{tabular}
\end{table}

\begin{table}[H]
\centering
\caption{Misdetection Rate for 3 Agents of ICP over $H=10$ prediction steps}
\label{tab:misdetections_2_agent_icp}
\renewcommand{\arraystretch}{1.2} 
{\scriptsize 
\begin{tabular}{lccccccccccc}
\toprule
\textbf{r} & \textbf{All Steps} & \textbf{Step 1} & \textbf{Step 2} &
\textbf{Step 3} & \textbf{Step 4} & \textbf{Step 5} & \textbf{Step 6} &
\textbf{Step 7} & \textbf{Step 8} & \textbf{Step 9} & \textbf{Step 10} \\
\midrule
1 & 40.90 &	34.67 &	34.50 &	36.00 &	37.17 &	40.33 &	43.67 &	44.17 &	45.83 &	46.17 &	46.50 \\
2 &  9.78 &	 9.17 &	 9.67 &	 9.00 &	 8.83 &	 9.00 &	10.00 &	10.83 &	10.50 &	10.83 &	10.00 \\
3 & 11.90 &	11.50 &	12.00 &	12.00 &	12.00 &	11.67 &	12.00 &	12.00 &	13.33 &	11.67 &	10.83 \\
4 & 11.03 &	12.50 &	12.67 &	12.00 &	11.50 &	11.17 &	11.00 &	10.83 &	 9.83 &	 9.50 &	 9.33 \\ 
\bottomrule
\end{tabular}
}
\end{table}

We investigate the effect of varying the smoothing parameter, considering $\gamma = {0.2, 0.8, 0.9}$. For each setting, the iterative procedure is executed for several steps beyond the predefined stopping criterion. The results indicate that the CP-based uncertainty predictions $\boldsymbol{q}^{(r)}$ approach a convergent value $\boldsymbol{q}^*$ but exhibit oscillatory behavior as iterations progress past the stopping condition. This phenomenon is particularly pronounced for larger values of $\gamma$, corresponding to lower smoothing. The convergence towards the stopping criterion typically occurs around iteration $4$; however, subsequent iterations lead to increased divergence in the uncertainty sets. As mentioned earlier, iteration 2 achieves the optimal balance, with minimal collisions and elevated success rates, as well as satisfying the targeted coverage. Notably, most of the \(\boldsymbol{q}^{(r)}\) are found to adapt to values near \(\boldsymbol{q}^{(2)}\) which might indicate a fixed-point.

\begin{figure}[htbp]
  \centering
  \begin{minipage}[b]{0.325\textwidth}
    \centering
    \includegraphics[width=\linewidth]{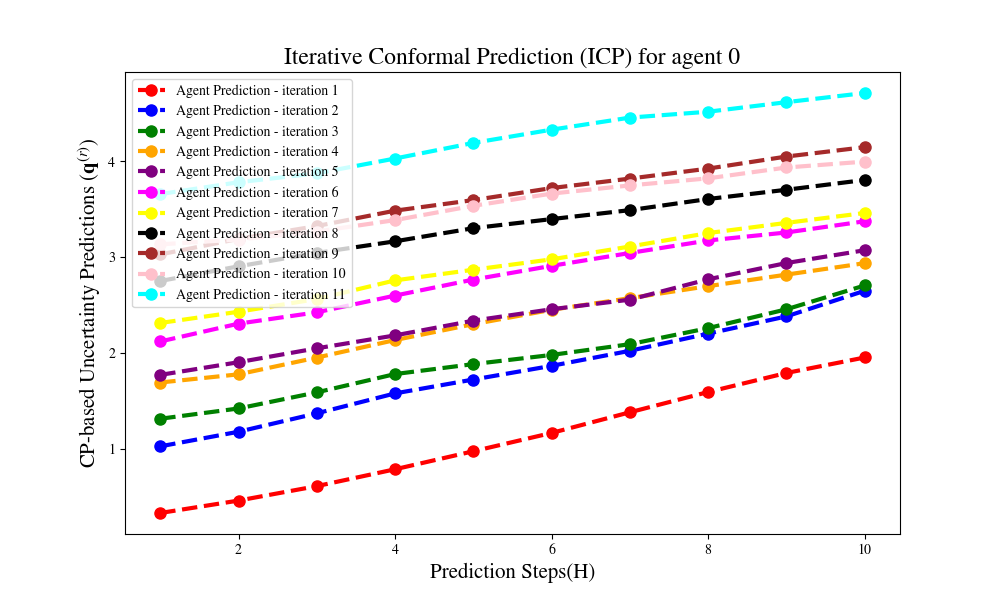}
    \captionof{figure}{Agent 0}
    \label{fig:minipage1}
  \end{minipage}
  \begin{minipage}[b]{0.325\textwidth}
    \centering
    \includegraphics[width=\linewidth]{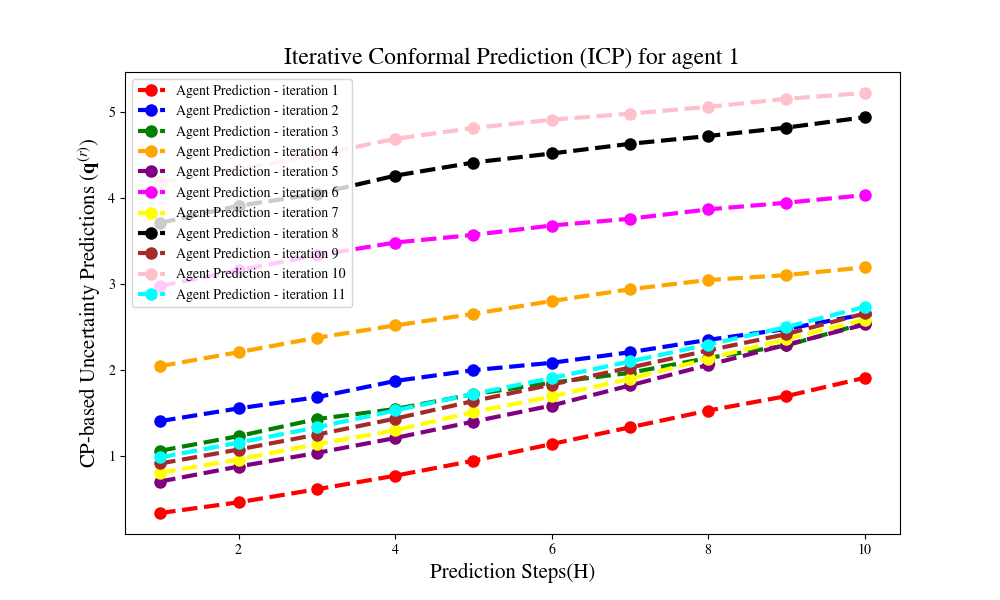}
    \captionof{figure}{Agent 1}
    \label{fig:minipage2}
  \end{minipage}
  \begin{minipage}[b]{0.325\textwidth}
    \centering
    \includegraphics[width=\linewidth]{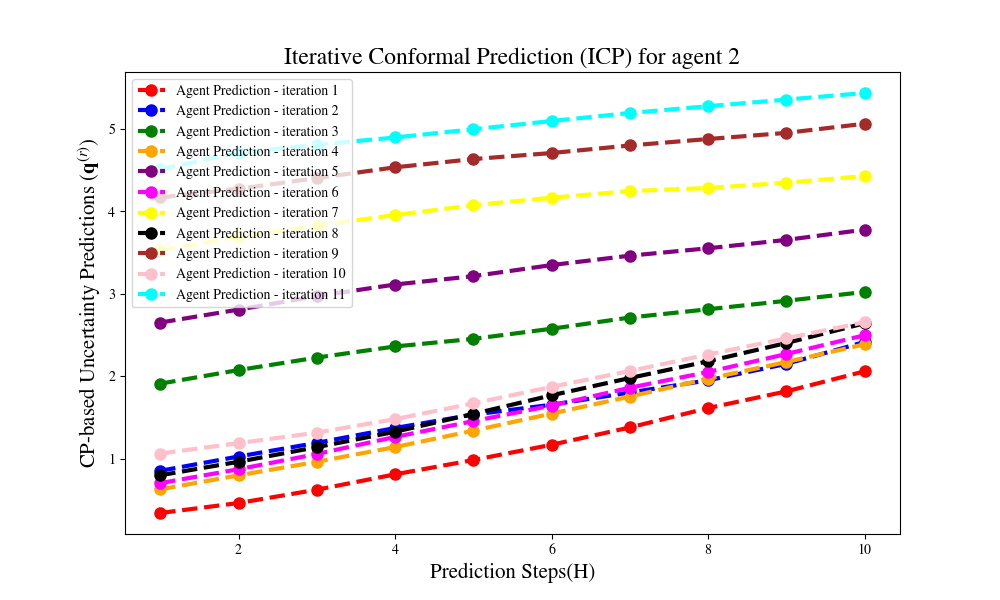}
    \captionof{figure}{Agent 2}
    \label{fig:minipage3}
  \end{minipage}
  
  \caption{ICP($\gamma=0.9$): CP-based Uncertainty Predictions over Prediction Steps for 3 agents}
  \label{fig:C_radii_3_agent_ISCP}
\end{figure}

\begin{figure}[htbp]
  \centering
  \begin{minipage}[b]{0.325\textwidth}
    \centering
    \includegraphics[width=\linewidth]{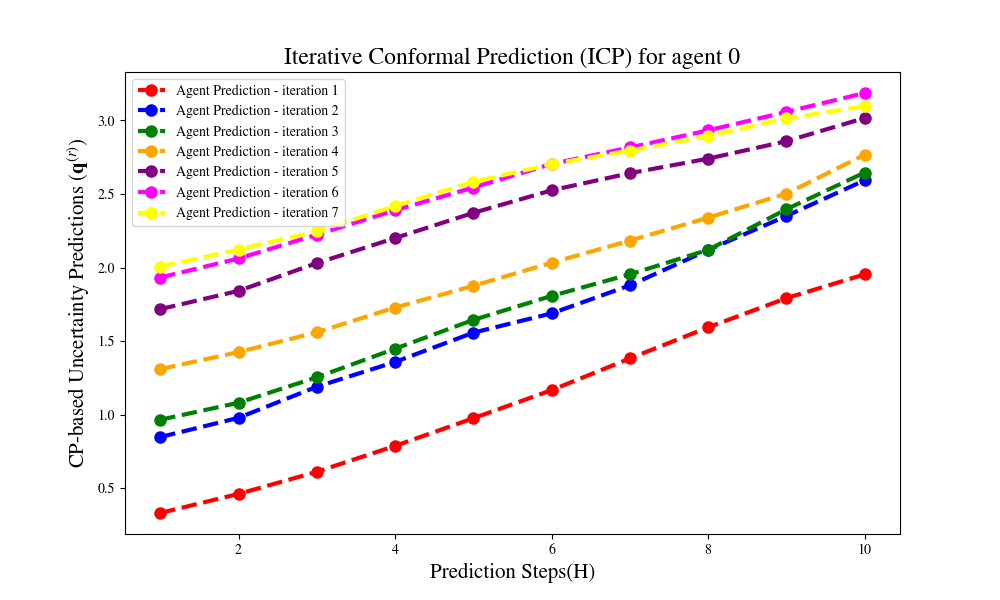}
    \captionof{figure}{Agent 0}
    \label{fig:minipage1}
  \end{minipage}
  \begin{minipage}[b]{0.325\textwidth}
    \centering
    \includegraphics[width=\linewidth]{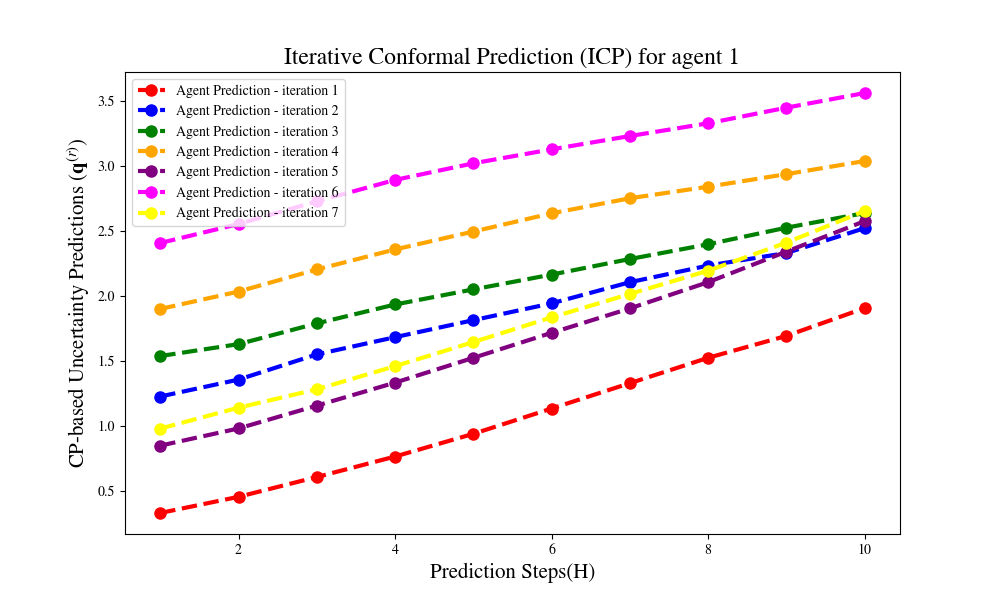}
    \captionof{figure}{Agent 1}
    \label{fig:minipage2}
  \end{minipage}
  \begin{minipage}[b]{0.325\textwidth}
    \centering
    \includegraphics[width=\linewidth]{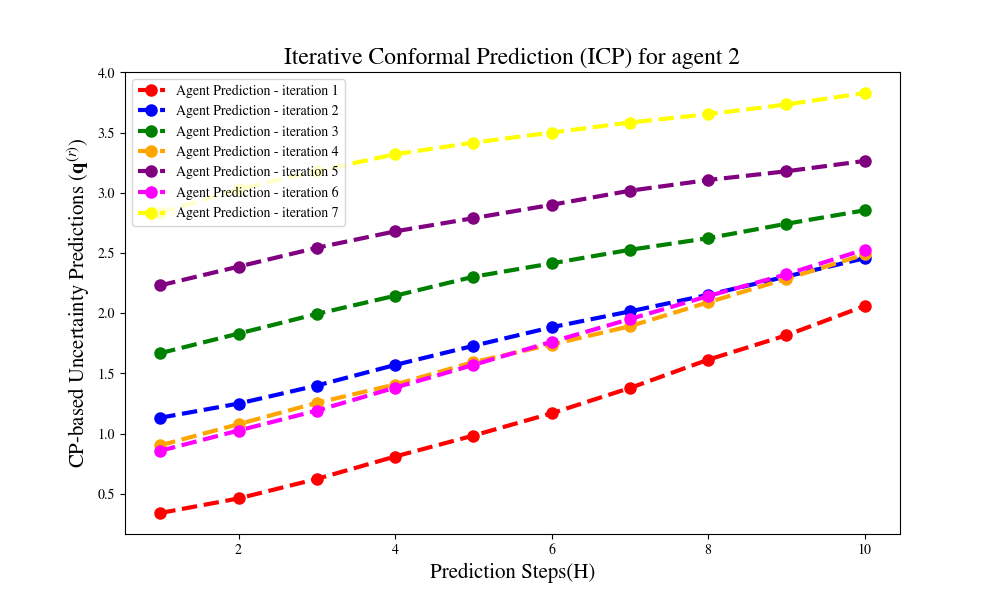}
    \captionof{figure}{Agent 2}
    \label{fig:minipage3}
  \end{minipage}
  
  \caption{ICP($\gamma=0.8$): CP-based Uncertainty Predictions over Prediction Steps for 3 agents}
  \label{fig:C_radii_3_agent_ISCP}
\end{figure}

\begin{figure}[htbp]
  \centering
  \begin{minipage}[b]{0.325\textwidth}
    \centering
    \includegraphics[width=\linewidth]{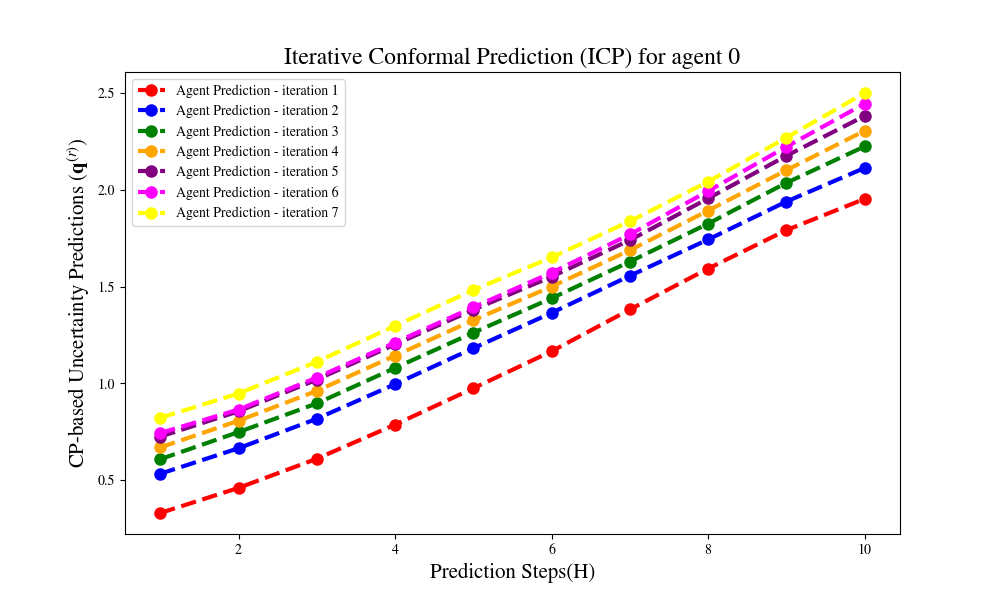}
    \captionof{figure}{Agent 0}
    \label{fig:minipage1}
  \end{minipage}
  \begin{minipage}[b]{0.325\textwidth}
    \centering
    \includegraphics[width=\linewidth]{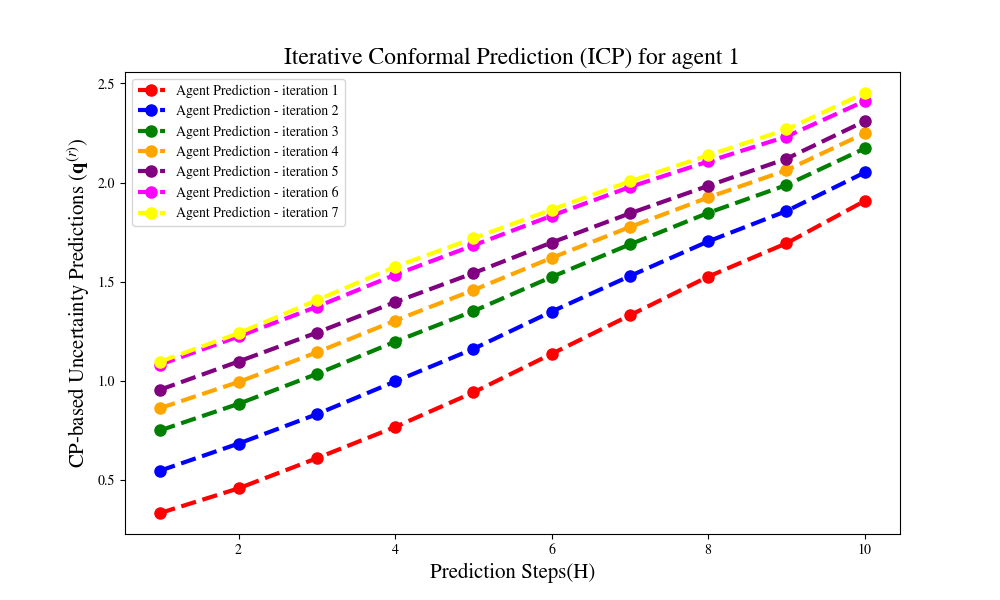}
    \captionof{figure}{Agent 1}
    \label{fig:minipage2}
  \end{minipage}
  \begin{minipage}[b]{0.325\textwidth}
    \centering
    \includegraphics[width=\linewidth]{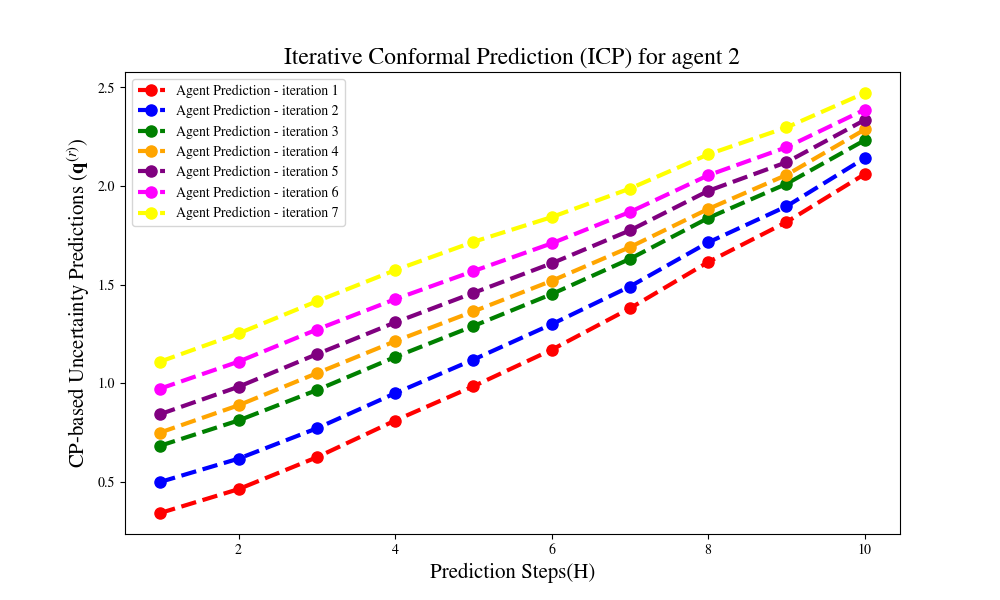}
    \captionof{figure}{Agent 2}
    \label{fig:minipage3}
  \end{minipage}
  
  \caption{ICP($\gamma=0.2$): CP-based Uncertainty Predictions over Prediction Steps for 3 agents}
  \label{fig:C_radii_3_agent_ISCP}
\end{figure}

\end{document}